\title{Faster Exponential Algorithm for Permutation Pattern Matching}
\author[ ]{Paweł Gawrychowski\thanks{Partially supported by the Bekker programme of the Polish National Agency for Academic Exchange (PPN/BEK/2020/1/00444).}}
\author[ ]{Mateusz Rzepecki}
\affil[ ]{Institute of Computer Science, University of Wrocław, Poland}
\affil[ ]{\texttt{\href{mailto:gawry@cs.uni.wroc.pl}{gawry@cs.uni.wroc.pl}}, \texttt{\href{mailto:mateuszrzepecki99@gmail.com}{mateuszrzepecki99@gmail.com}}}
\newtheorem{definition}{Definition}[section]
\newtheorem{theorem}[definition]{Theorem}
\newtheorem{lemma}[definition]{Lemma}
\newtheorem{fact}[definition]{Fact}
\newcommand{\DP}[0]{\normalfont\textbf{DP}}
\newcommand{\OO}[0]{\mathcal{O}}
\newcommand{\values}[0]{\normalfont\textbf{Values}}
\newcommand{\DD}[0]{\mathcal{D}}
\newcommand{\problem}[3]{
\begin{table}[htb]
    \centering
    \begin{tabular}{|p{0.1\linewidth} p{0.9\linewidth}|}
        \hline
        \multicolumn{2}{|l|}{\textsc{#1}} \\
        \textbf{Input:} &  #2 \\
        \textbf{Output:} & #3 \\
        \hline
    \end{tabular}
\end{table}
}
\begin{document}
\date{}
\maketitle

\thispagestyle{empty}

\begin{abstract}
The Permutation Pattern Matching problem asks, given two permutations $\sigma$ on $n$ elements and $\pi$, whether $\sigma$ admits a subsequence
with the same relative order as $\pi$ (or, in the counting version, how many such subsequences are there).
This natural problem was shown by Bose, Buss and Lubiw [IPL 1998] to be NP-complete, and hence we should seek
exact exponential time solutions.
The asymptotically fastest such solution up to date, by Berendsohn, Kozma and Marx [IPEC 2019], works in $\OO(1.6181^n)$ time.
We design a simple and faster $\OO(1.415^{n})$ time algorithm for both the detection and the counting version.
We also prove that this is optimal among a certain natural class of algorithms.
\end{abstract}

\section{Introduction}

Permutations are perhaps the most fundamental objects in discrete mathematics and computer science.
A natural notion concerning two permutations $\sigma$ and $\pi$ is that of $\sigma$ containing $\pi$ as a sub-permutation,
denoted $\pi \preceq \sigma$. 
This leads to the study of pattern avoidance, where we fix $\pi$ and want enumerate all $\sigma$ such that $\pi \preceq \sigma$
does not hold. The classical example considered by Knuth is that of $231$-avoiding permutations, which are exactly
those that can be sorted with a stack~\cite{Knuth68}.
The corresponding general algorithmic question called Permutation Pattern Matching (PPM) is to decide, for a given length-$n$
permutation $\sigma$ and length-$k$ permutation $\pi$, if $\pi \preceq \sigma$.
The counting version of this problem asks about the number of occurrences.
For instance, for $\sigma = (3, 2, 5, 4, 1)$ and $\pi = (1, 3, 2)$ there are two such occurrences, since
both $(3, 5, 4)$ and $(2, 5, 4)$ are subsequences of $\sigma$ with the same relative order as $\pi$,
and there are no other such subsequences.

\paragraph{Related work.} 
Bose, Buss and Lubiw~\cite{pub:20296} showed that PPM is NP-complete, thus we do not expect a polynomial-time
algorithm that solves this problem to exist.
Later, Jelínek and Kynčl~\cite{JelinekK17} strengthened this to hold even when $\pi$ has no decreasing subsequence
of length $3$ while $\sigma$ has no decreasing subsequence of length $4$ (which is tight in a certain sense).
However, one can still ask what is the asymptotically fastest algorithm.

For the detection version, Guillemot and Marx~\cite{DBLP:journals/corr/GuillemotM13} showed that PPM can be solved
in time $2^{\OO(k^2 \log{k})}\cdot n$. Their solution was based on a combinatorial result of Marcus and Tardos~\cite{DBLP:Marcus},
which was later refined by Fox~\cite{DBLP:mathImprovementksquared} resulting in an algorithm working in time $2^{\OO(k^2)}\cdot n$.

For the counting version, the first nontrivial algorithm working in time $\OO(n^{\frac{2}{3}k + 1})$ was designed by
Albert, Aldred, Atkinson and Holton~\cite{Albert2001AlgorithmsFP}. Ahal and Rabinovich~\cite{AhalRabinovich} soon improved this to
$n ^ {0.47 k + o(k)}$. More recently, Berendsohn, Kozma and Marx~\cite{DBLP:journals/corr/abs-1908-04673} formulated
the problem as a binary constraint satisfaction problem, and showed an algorithm working in time
$\OO(n^{k/4 + o(k)})$. On the lower bound side, they proved that solving the counting version in time
$f(k) \cdot n^{o(k/\log k)}$ would contradict the exponential time hypothesis~\cite{DBLP:journals/jcss/ImpagliazzoP01}.

In this paper we are interested in algorithms with running time depending only on the value of $n$. From such point of
view, none of the algorithms mentioned above improve upon the trivial solution working in time $2^{n}$. However,
Bruner and Lackner~\cite{Bruner2012AFA} were able to design an algorithm solving the counting version in time $\OO(1.79^n)$. 
Berendsohn, Kozma and Marx~\cite{DBLP:journals/corr/abs-1908-04673} improved on their result with an elegant
polynomial-space algorithm working in time $\OO(1.6181^n)$. In fact, their algorithm can be described in just a few sentences.
Let $f$ be one of the sought solutions, that is, a subsequence of $\sigma$ with the same relative order as $\pi$. 
We first guess $f(2), f(4), \ldots$ and then run a linear-time procedure for counting the solutions that are consistent
with the already guessed values. Thus, the main insight is that guessing how the even positions of the pattern should be
mapped to the text simplifies the problem to essentially a one-dimensional version that is easy to solve very efficiently.

\paragraph{Our results and techniques. } % może napisać trochę więcej
We show how to solve the counting version of PPM in time $\OO(1.415^{n})$ with a simple\footnote{See \url{https://github.com/Mateuszrze/Permutation-Pattern-Matching-Implementation} for an implementation.} linear-space algorithm.
The main insight is that instead of guessing the exact positions in $\sigma$ for some elements of $\pi$, we can
guess a range for each element, with consecutive ranges overlapping by at most one position.
Then, we run a generalised linear-time procedure
for counting the solutions that are consistent with the guesses.
Additionally, we show that our guessing strategy is close to optimal among a certain natural class of approaches.

\section{Preliminaries}\label{sec:prel}

We denote $\{1, \ldots, n\}$ by $[n]$ and $\{n, \ldots, m\}$ by $[n, m]$.
A bijective function $\sigma~:~[n]~\rightarrow~[n]$ is called a permutation of length $n$.
For any $A \subseteq [n]$, the set $\{ \sigma(a) : a \in A\}$ is denoted by $\sigma(A)$.

The input to an instance of PPM consists of a length-$n$ permutation $\sigma$ and a length-$k$ permutation $\pi$, denoted $(\sigma,\pi)$. A solution to such an instance is an injection $f : [k] \rightarrow [n]$ such that:
\begin{description}
    \item[1)] $\forall_{i, j \in [k]} \pi(i) < \pi(j) \Leftrightarrow \sigma(f(i)) < \sigma(f(j))$,
    \item[2)] $\forall_{i, j \in [k]} i < j \Leftrightarrow f(i) < f(j)$.
\end{description}
Using transitivity, the above conditions can be equivalently rewritten as follows:
\begin{description}
        \item[1)] $\forall_{i \in [k - 1]} \sigma(f(\pi^{-1}(i))) < \sigma(f(\pi^{-1}(i + 1)))$ ($Y$-axis constraints),
        \item[2)] $\forall_{i \in [k - 1]} f(i) < f(i + 1)$ ($X$-axis constraints).
\end{description}

We consider the following problem:

\problem{Permutation Pattern Matching (PPM)}{
 length-$n$ permutation $\sigma$ and length-$k$ permutation $\pi$
}{
the number of solutions to $(\sigma,\pi)$.
}
\noindent When analysing the complexity of our algorithm, we assume constant-time arithmetic operations on integers not exceeding the
number of solutions, and use the standard Word RAM model otherwise. Without such an assumption, the overall
time complexity should be multiplied by another factor of $n$.

A segment decomposition of $(\sigma,\pi)$ consists of $k$ segments $[\ell_1, r_1], \ldots, [\ell_k, r_k]$, such that:
\begin{description}
    \item[1)] $\ell_i \leq r_i$ for $i \in [k]$,
    \item[2)] $[\ell_i, r_i] \subseteq [n]$ for $i \in [k]$,
    \item[3)] $r_i \leq \ell_{i + 1}$ for $i \in [k - 1]$.
\end{description}
We stress that the segments do not need to be disjoint, however two consecutive segments overlap by at most one position.
We say that a solution $f$ to $(\sigma,\pi)$ respects such a segment decomposition when
$ f(i) \subseteq [\ell_i, r_i]$ for all $i \in [k]$.
See Figure~\ref{fig:segmentDecompositionExample} for an illustration.

\begin{figure}[hbt]
    \centering
    \includegraphics{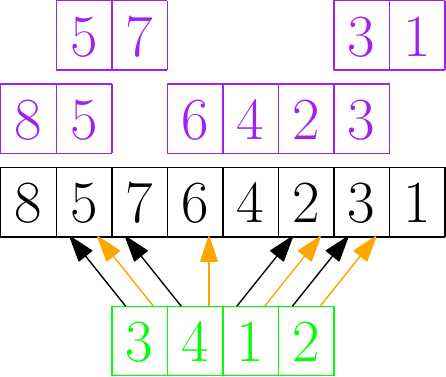}
    \caption{Black blocks represent $\sigma = (8, 5, 7, 6, 4, 2, 3, 1)$, green blocks represent $\pi = (3, 4, 1, 2)$, and purple blocks represent a segment decomposition $[1, 2], [2, 3], [4, 7], [7, 8]$ of $(\sigma,\pi)$. Black arrows represent a solution that respects the segment decomposition while orange arrows represent a solution that do not respect the segment decomposition.}
    \label{fig:segmentDecompositionExample}
\end{figure}

\section{Counting Consistent Solutions}

Let $\sigma$ be a length-$n$ permutation, $\pi$ be a length-$k$ permutation, and
consider a segment decomposition $A$ of $(\sigma,\pi)$ consisting of $[\ell_1, r_1], \ldots, [\ell_k, r_k]$.
Our goal is to count solutions to $(\sigma,\pi)$ that are consistent with $A$ with a linear-time algorithm
based on a dynamic programming.

We define $\DP_{i, j}$ as the number of functions $f: \pi^{-1}([i]) \rightarrow [n]$, such that all $Y$-axis
constraints concerning $\pi^{-1}([i])$ are fulfilled, $f(\pi^{-1}(i)) = j$, and $f(p) \subseteq [\ell_p,r_p]$
for all $p\in \pi^{-1}([i])$.
In particular, we will show that $\sum_{j=1}^{n} \DP_{k,j}$ is the total number of solutions that respect $A$.
Denoting $\values_{i} = \sigma([\ell_i, r_i])$, we see that $\DP_{i,j}$ is possibly nonzero only for $j\in \values_{i}$.
Thus, because $\sum_{i = 1}^{k} |\values_i| \leq n + k - 1$ due to the segments overlapping by at most one position,
we have only $\OO(n)$ possible nonzero values to compute and store. 
Algorithm~\ref{alg:greedY} gives a high-level description of a procedure that computes all such values.
See Figure~\ref{fig:greedyAlgorithm1} for an illustration of how it works.

\begin{algorithm}[htb]

\SetAlgoCaptionSeparator{}
\caption{}
\label{alg:greedY}
\SetKwInput{KwInput}{Input}
\SetKwInput{KwOutput}{Output}
\SetAlgoLined
\SetKwData{Left}{left}\SetKwData{This}{this}\SetKwData{Up}{up}
\tabcolsep=0pt
\begin{tabular}{@{}lp{0.77\linewidth}}
\KwInput{} & $\sigma$, $\pi$ and $A$\\
\KwOutput{} & the number of solutions to $(\sigma,\pi)$ that respect $A$
\end{tabular}
\medskip

\Begin{
\For{$i \in [k]$}{
$\values_i \leftarrow \sigma([l_i, r_i])$ \\
}
$\DP_{0,0} \leftarrow 1$ \\
\For{$i \in [k]$}{
$p \leftarrow \pi^{-1}(i)$ \\ 
\For{$j \in \values_p$}{
    $\DP_{i, j} \leftarrow \sum_{j'<j}\DP_{i-1,j'}$ \\
}
}
\Return $\sum_{j=1}^{n}\DP_{k,j}$}
\end{algorithm}

\begin{lemma}
\label{lemma:dpFormula}
$\DP_{i, j} = \sum_{j'<j}\DP_{i-1,j'}$ for $i \in [n]$ and $j \in \values_i$, where $\DP_{0, 0} = 1$ and $\DP_{0, j} = 0$ for $j> 0$.
\end{lemma}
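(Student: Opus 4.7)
The plan is to verify the recurrence via a direct bijective argument between the set of functions $f$ counted by $\DP_{i,j}$ and the disjoint union, over $j' < j$, of the sets of functions $g$ counted by $\DP_{i-1,j'}$. The base case will be immediate: the empty function on $\pi^{-1}([0]) = \emptyset$ vacuously satisfies all constraints, so $\DP_{0,0}=1$ and $\DP_{0,j}=0$ for $j > 0$, where $j'=0$ should be thought of as a sentinel value strictly below any actual $\sigma$-value in $[n]$.

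For the inductive step, I would let $p = \pi^{-1}(i)$ and fix $j \in \values_p$. In the forward direction, given a function $f$ counted by $\DP_{i,j}$, I would take the restriction $g = f|_{\pi^{-1}([i-1])}$ and set $j'$ to be the $\sigma$-value at position $\pi^{-1}(i-1)$ if $i \geq 2$, or $j'=0$ if $i=1$. This restriction inherits every Y-axis constraint within $\pi^{-1}([i-1])$ and every domain constraint from $f$, so it is counted by $\DP_{i-1,j'}$; moreover, the single Y-axis constraint spanning steps $i-1$ and $i$ is precisely the inequality $j' < j$.

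In the backward direction, given a pair $(g, j')$ with $j' < j$ and $g$ counted by $\DP_{i-1, j'}$, I would extend $g$ to a function $f$ on $\pi^{-1}([i])$ by setting $f(p)$ to the unique element of $[\ell_p, r_p]$ whose $\sigma$-value is $j$. Such an element exists by the assumption $j \in \values_p = \sigma([\ell_p,r_p])$ and is unique because $\sigma$ is a bijection, so the domain constraint at $p$ holds by construction. The only new Y-axis constraint introduced is between $\pi^{-1}(i-1)$ and $\pi^{-1}(i)$, which again reduces to $j' < j$ and holds by assumption. The two maps are inverses, and summing over all $j' < j$ gives the claimed identity.

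The one genuine subtlety to nail down is aligning the $i = 1$ case with the sentinel convention $\DP_{0,0} = 1$, $\DP_{0,j}=0$ for $j>0$: the recurrence must yield $\DP_{1,j} = 1$ for every $j \in \values_{\pi^{-1}(1)}$, reflecting the unique function from the singleton domain $\{\pi^{-1}(1)\}$ to the position $\sigma^{-1}(j)$, and this falls out exactly because $\sum_{j' < j}\DP_{0,j'} = \DP_{0,0} = 1$ for any $j \geq 1$. No further obstacles arise since injectivity and X-axis constraints play no role at this stage, being automatically inherited later from the segment decomposition and the Y-axis constraints.
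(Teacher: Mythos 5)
Your proposal is correct and follows essentially the same route as the paper: both decompose the functions counted by $\DP_{i,j}$ according to the $\sigma$-value $j'$ assigned at $\pi^{-1}(i-1)$, observing that the only new $Y$-axis constraint is $j'<j$ and that the domain constraint at $\pi^{-1}(i)$ is guaranteed by $j\in\values_i$. You merely spell out as an explicit restriction/extension bijection (plus the $i=1$ sentinel case) what the paper asserts in one sentence.
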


\begin{proof}
For a fixed $i \in [n]$ and $j \in \values_i$ we know that $\DP_{i, j}$ is equal to sum over $j^{'} < j$ of the number of functions $f: \pi^{-1}([i - 1]) \rightarrow [n]$, such that all $Y$-axis constraints concerning $\pi^{-1}([i - 1])$ are fulfilled, $f(\pi^{-1}(i - 1)) = j^{'}$, and $f(p) \subseteq [\ell_p,r_p]$ for all $p\in \pi^{-1}([i - 1])$. As a result, we get that $\DP_{i, j} = \sum_{j' < j}\DP_{i - 1, j'}$. 
\end{proof}

\begin{lemma}
\label{lemma:YAndRespectImpliesX}
If a function $f : [k] \rightarrow [n]$ fulfills $Y$-axis constraints and respects $A$, then $f$ is a solution to $(\sigma, \pi)$ that respects $A$.
\end{lemma}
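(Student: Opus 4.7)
The plan is to verify the two conditions in the original definition of a solution (injectivity is implicit in ``injection'') plus the $X$-axis constraints, since the $Y$-axis constraints and the property of respecting $A$ are given by hypothesis. So what remains is to prove that $f$ is injective and that $f(i) < f(i+1)$ for every $i \in [k-1]$.

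For injectivity, I would use the equivalent reformulation of the $Y$-axis constraints stated in the preliminaries, namely that $\sigma(f(\pi^{-1}(i))) < \sigma(f(\pi^{-1}(i+1)))$ for every $i \in [k-1]$. This says that the composition $\sigma \circ f \circ \pi^{-1}$ is strictly increasing on $[k]$, hence injective. Since $\sigma$ and $\pi^{-1}$ are bijections, $f$ must be injective as well.

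For the $X$-axis constraints, I would combine the hypothesis that $f$ respects $A$ with the definition of a segment decomposition. Respecting $A$ gives $f(i) \in [\ell_i, r_i]$ and $f(i+1) \in [\ell_{i+1}, r_{i+1}]$, and condition (3) of the segment decomposition gives $r_i \leq \ell_{i+1}$. Chaining these inequalities yields
\[
f(i) \;\leq\; r_i \;\leq\; \ell_{i+1} \;\leq\; f(i+1).
\]
This only yields $f(i) \le f(i+1)$, and the single case that needs care is the boundary case $f(i) = r_i = \ell_{i+1} = f(i+1)$ where two consecutive segments share an endpoint; but injectivity of $f$ (established above) rules this out, upgrading the weak inequality to the strict $f(i) < f(i+1)$ that the $X$-axis constraints require. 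This is the only delicate point, and it is precisely why the definition of segment decomposition allows overlap of at most one position rather than being fully disjoint: anything wider would break the argument, but a single shared endpoint is handled for free by injectivity.
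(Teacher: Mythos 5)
Your proof is correct and follows essentially the same route as the paper's: the paper likewise derives the $X$-axis constraints from $f(i)\le r_i\le \ell_{i+1}\le f(i+1)$ together with injectivity of $f$ (which it also obtains from the $Y$-axis constraints). You have simply spelled out the details — in particular why injectivity follows and why it is needed at the shared endpoint — more explicitly than the paper does.
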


\begin{proof}
We know that $f$ fulfills:
\begin{description}
    \item[1)] all $Y$-axis constraints,
    \item[2)] all $X$-axis constraints, since $f$ respects $A$,  $r_i \leq \ell_{j}$ for any $i < j$ and $f$ is an injection by $Y$-axis constraints.
\end{description} 
Thus, $f$ is a solution to $(\sigma,\pi)$ that respects $A$.
\end{proof}

\begin{lemma}
\label{lemma:correctnessOfGreedy}
Algorithm~\ref{alg:greedY} computes the number of solutions to $(\sigma,\pi)$ that respect $A$.
\end{lemma}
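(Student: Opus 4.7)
The plan is to reduce the correctness of Algorithm~\ref{alg:greedY} to the two preceding lemmas, using induction on $i$ to verify that the table entries the algorithm produces match the intended combinatorial quantity, and then checking that the final sum captures precisely the solutions that respect $A$.

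First, I would prove by induction on $i$ that for every $j$ the value $\DP_{i,j}$ assigned by the algorithm equals the number of functions $f:\pi^{-1}([i])\to [n]$ that satisfy the $Y$-axis constraints on $\pi^{-1}([i])$, have $f(\pi^{-1}(i))$ of ``$j$-type'' (either position $j$ or value $j$, consistent with the paper's convention used in the recurrence), and satisfy $f(p)\in[\ell_p,r_p]$ for every $p\in\pi^{-1}([i])$. The base case $i=0$ is immediate: the only admissible function is the empty one, and the algorithm initialises $\DP_{0,0}=1$ with all other entries implicitly zero. For the inductive step, I would simply invoke Lemma~\ref{lemma:dpFormula}, which expresses $\DP_{i,j}$ as $\sum_{j'<j}\DP_{i-1,j'}$ exactly matching the algorithm's assignment whenever $j\in \values_{\pi^{-1}(i)}$; outside this set the quantity being counted is necessarily zero because the requirement $f(\pi^{-1}(i))\in[\ell_{\pi^{-1}(i)},r_{\pi^{-1}(i)}]$ would be violated, so the algorithm's omission of those entries is harmless.

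Next, I would interpret the returned sum $\sum_{j=1}^{n}\DP_{k,j}$. By the above, partitioning on the value $j$ shows that it counts precisely the functions $f:[k]\to[n]$ that satisfy all $Y$-axis constraints and respect $A$. Lemma~\ref{lemma:YAndRespectImpliesX} tells us that any such function is already a solution to $(\sigma,\pi)$ respecting $A$, so every term counted contributes a legitimate respecting solution. Conversely, every solution to $(\sigma,\pi)$ respecting $A$ trivially fulfills the $Y$-axis constraints (by definition of a solution) and is counted exactly once, in the summand corresponding to $j=f(\pi^{-1}(k))$.

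I do not expect a real obstacle here; the whole argument is a bookkeeping exercise once Lemmas~\ref{lemma:dpFormula} and~\ref{lemma:YAndRespectImpliesX} are in hand. The one detail worth being careful about is the double role of the index $j$: the recurrence $\sum_{j'<j}$ is consistent only when $j,j'$ are read the same way (either both as positions together with the understanding that ``$<$'' refers to $\sigma$-values, or both as $\sigma$-values directly), and I would state this convention explicitly at the start of the induction so that the step matches the algorithm's loop $j\in\values_p$ verbatim. With that clarification, combining the induction with Lemma~\ref{lemma:YAndRespectImpliesX} yields the claim.
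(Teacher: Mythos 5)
Your proof is correct and follows essentially the same route as the paper's: use Lemma~\ref{lemma:dpFormula} to identify the table computed by the algorithm with the defined quantities $\DP_{i,j}$, read the returned sum $\sum_{j=1}^{n}\DP_{k,j}$ as counting the functions $f:[k]\to[n]$ that satisfy the $Y$-axis constraints and respect $A$, and then invoke Lemma~\ref{lemma:YAndRespectImpliesX}. Your explicit induction and your remark that $j$ must consistently denote $\sigma$-values (so that $j'<j$ really encodes the $Y$-axis constraint and matches the loop over $\values_p=\sigma([\ell_p,r_p])$) only spell out details the paper leaves implicit.
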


\begin{proof}
From Lemma~\ref{lemma:dpFormula} we know that Algorithm~\ref{alg:greedY} computes $\sum_{j=1}^{n}\DP_{k,j}$. From the definition of $\DP$ we know that $\sum_{j=1}^{n}\DP_{k,j}$ is the number of functions $f: \pi^{-1}([k]) \rightarrow [n]$, such that all $Y$-axis constraints concerning $\pi^{-1}([k])$ are fulfilled, $f(\pi^{-1}(k)) \in [n]$, and $f(p) \subseteq [\ell_p,r_p]$ for all $p\in \pi^{-1}([n])$. Since $[k] = \pi^{-1}(k)$ and $f([k]) \subseteq [n]$ this simplifies to the number of functions $f: [k] \rightarrow [n]$, such that all $Y$-axis constraints are fulfilled, and $f$ respects $A$. From Lemma~\ref{lemma:YAndRespectImpliesX} we know that this is equal to the number of solutions to $(\sigma, \pi)$ that respect $A$.
\end{proof}

\begin{lemma}
\label{lemma:greedYtime}
Algorithm~\ref{alg:greedY} can be implemented to work in $\OO(n)$ time and $\OO(n)$ space.
\end{lemma}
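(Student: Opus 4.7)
The plan is to implement Algorithm~\ref{alg:greedY} so that iteration $i$ of the outer loop runs in time $\OO(|\values_{\pi^{-1}(i-1)}| + |\values_{\pi^{-1}(i)}|)$; summing over $i$ and using $\sum_{p=1}^{k}|\values_p| \leq n + k - 1$ will then give the claimed $\OO(n)$ total.

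The crucial preprocessing is to produce, for every $p \in [k]$, the list $\values_p$ sorted in increasing order, all in $\OO(n)$ time. I would first compute $\sigma^{-1}$ and then, by walking through each segment $[\ell_p, r_p]$ and marking its positions, build a table that lists for every $x \in [n]$ the (at most two) indices of segments containing $x$; the overlap condition gives $\sum_p (r_p - \ell_p + 1) \leq n + k - 1$, so this step is $\OO(n)$. A final scan $v = 1, 2, \ldots, n$, appending $v$ to $\values_p$ for each segment $p$ with $\sigma^{-1}(v) \in [\ell_p, r_p]$, delivers all lists $\values_p$ in increasing order of value in $\OO(n)$ time.

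For the main loop, I store only two rows of $\DP$ at once, represented as lists of pairs $(j', \DP_{i-1, j'})$ with $j' \in \values_{\pi^{-1}(i-1)}$ in increasing order of $j'$; the base row is the singleton $(0, 1)$ to match Lemma~\ref{lemma:dpFormula}. At iteration $i$ a two-pointer sweep walks simultaneously through the sorted $\values_{\pi^{-1}(i-1)}$ and $\values_{\pi^{-1}(i)}$: one pointer advances through the previous row maintaining a running prefix sum of $\DP_{i-1,j'}$, while the other iterates the new positions $j \in \values_{\pi^{-1}(i)}$ in order, setting $\DP_{i,j}$ to the current prefix sum once every $j' < j$ has been incorporated. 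Each iteration clearly costs $\OO(|\values_{\pi^{-1}(i-1)}| + |\values_{\pi^{-1}(i)}|)$, and the final output $\sum_j \DP_{k,j}$ is accumulated during the last sweep.

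Summing the per-iteration costs yields the time bound, and at any moment only the preprocessing tables and two consecutive rows of $\DP$ are in memory, for $\OO(n)$ space. The main obstacle is the preprocessing that outputs all $\values_p$ sorted by value in linear time; once those lists are available, the two-pointer sweep follows the recurrence of Lemma~\ref{lemma:dpFormula} directly and nothing else is delicate.
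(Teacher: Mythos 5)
Your implementation is correct and matches the paper's: a bucket-sort-style linear-time preprocessing that produces each $\values_p$ in sorted order, followed by a two-pointer sweep over consecutive rows maintaining a running prefix sum, with the total cost bounded by $\sum_p |\values_p| \leq n+k-1$. The only (welcome) refinement is that you index the rows by $\pi^{-1}(i)$ explicitly and note that two rows suffice for the space bound, which the paper states slightly more loosely.
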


\begin{proof} 
We start with generating and sorting the arrays $\values_1, \values_2, \ldots, \values_k$. This takes $\OO(n)$
time and space using bucket sort due to the total length of all arrays being at most $n+k-1$ as observed earlier and
all elements being from $[n]$.
Next, we calculate and store only $\DP_{i,j}$ with $j\in \values_{i}$. For each $i$, these values are stored in an array
indexed by the rank of $j$ in $\values_{i}$. The bottleneck is calculating $\sum_{j'<j}\DP_{i-1,j'}$ efficiently.
We observe that we can consider all $j\in\values_{i}$ in the increasing order. This allows us to maintain
$\sum_{j'<j}\DP_{i-1,j'}$ in total time $\OO(|\values_{i-1}|+|\values_{i}|)$ by maintaining a pointer to the
next element of $\values_{i-1}$ as we iterate over the elements of $\values_{i}$.
Summing over all $i$, the time complexity is $\sum_{i=1}^{k}\OO(|\values_{i-1}|+|\values_{i}|)=\OO(n)$.
\end{proof}

\begin{figure}[htb]
    \centering
    \begin{minipage}{0.45\textwidth}
    \includegraphics{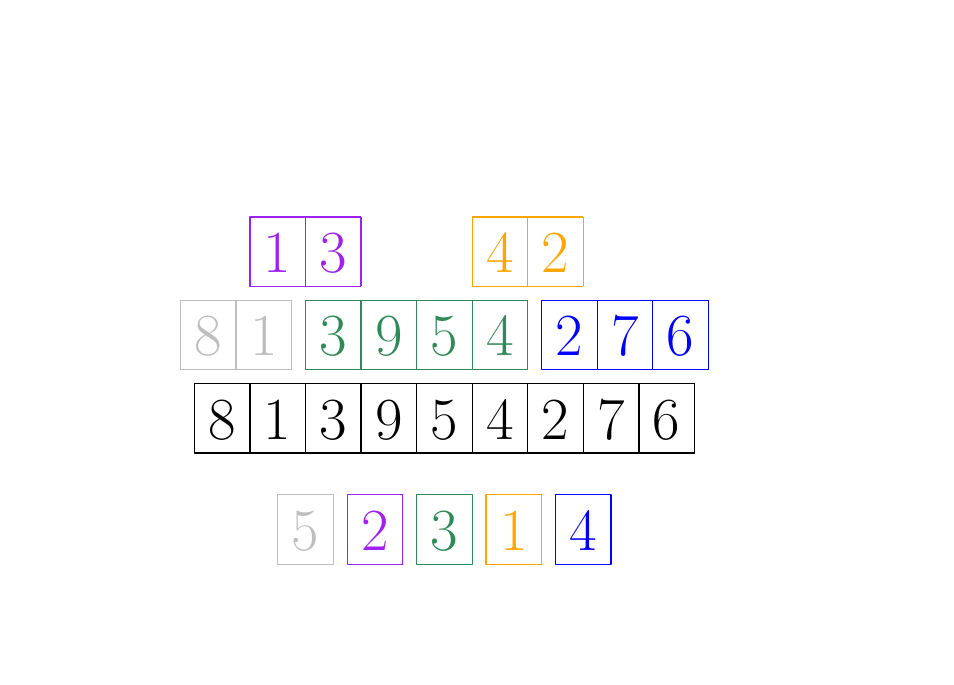}
    \end{minipage}
    \begin{minipage}{0.45\textwidth}
    \includegraphics{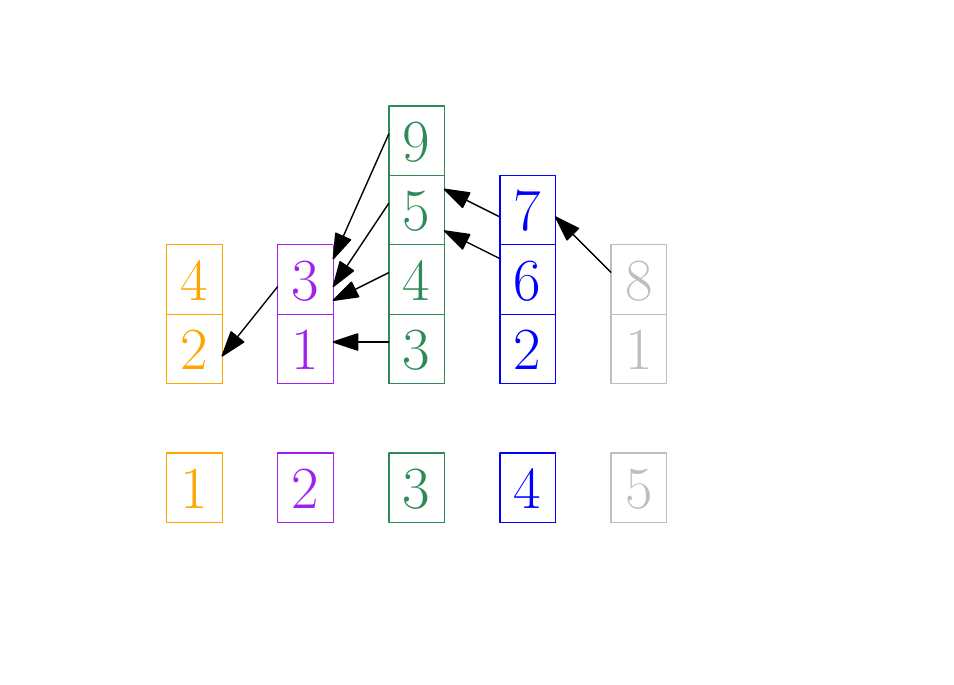}
    \end{minipage}
     \caption{Left: Black blocks represent $\sigma = (8, 1, 3, 9, 5, 4, 2, 7, 6)$, lower colored blocks represent $\pi = (5, 2, 3, 1, 4)$ and upper colored blocks represent segment decomposition $[1, 2], [2, 3], [3, 6], [6, 7], [7, 9]$ of $(\sigma,\pi)$. Right: A run of Algorithm~\ref{alg:greedY}, lower blocks represent $\pi$, upper blocks represent potential nonzero values of $\DP$, and arrows represent transitions in the algorithm.}
    \label{fig:greedyAlgorithm1}
\end{figure}

\section{Faster Algorithm}
\label{section:fastAlgorithm}

Let $\sigma$ be a length-$n$ permutation and $\pi$ be a length-$k$ permutation.
Let $A = \{i \in [k] : i \text{ is even}\} $, $B = \{i \in [n] : i \text{ is even}\}$ and $C$ be a set of all increasing functions $A \rightarrow B$. 

For each $g \in C$ let $S_g$ be a segment decomposition of $(\sigma, \pi)$ defined as $\ell_1 = 1$, $\ell_{2i} = g(2i)$,
$r_{2i} = \min \{n, g(2i) + 1\}$, $\ell_{2i + 1} = r_{2i}$, $r_{2i + 1} = \ell_{2i + 2}$, additionally if $k$ is odd then $r_k = n$.
Let $\DD = \{S_g : g \in C\}$.
Intuitively, each segment decomposition in $\DD$ fixes the value of $f(a)$, for each $a\in A$, up to xoring with $1$.
See Figure~\ref{fig:greedyAlgorithm1} (left) for an example of a segment decomposition obtained for the function $g(2) = 2$, $g(4) = 6$. 

We define $c(i)=2\lfloor i/2\rfloor$.

\begin{fact}
\label{obs:boundsOnC}
$i\in [c(i),c(i)+1]$
\end{fact}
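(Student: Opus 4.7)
The plan is to proceed by a straightforward case analysis on the parity of $i$, using the definition $c(i) = 2\lfloor i/2 \rfloor$ directly.

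First I would handle the case when $i$ is even. Writing $i = 2m$, we have $\lfloor i/2 \rfloor = m$, so $c(i) = 2m = i$. Hence $i = c(i) \in [c(i), c(i)+1]$.

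Next I would handle the case when $i$ is odd. Writing $i = 2m+1$, we have $\lfloor i/2 \rfloor = m$, so $c(i) = 2m = i-1$. Hence $i = c(i)+1 \in [c(i), c(i)+1]$.

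There is no real obstacle here: the statement is essentially a restatement of the definition, and the only thing to check is that rounding $i$ down to the nearest even number loses at most one. Combining the two cases yields $i \in [c(i), c(i)+1]$ in both parities, completing the argument.
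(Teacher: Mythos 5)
Your proof is correct and is the obvious argument the paper leaves implicit (the paper states this as a Fact without proof). The parity case analysis is exactly the right way to verify that $c(i)=2\lfloor i/2\rfloor$ rounds $i$ down to the nearest even number, losing at most one.
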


\begin{lemma}
\label{lemma:uniqnessOfDecompositions}
For each solution $f$ to $(\sigma,\pi)$, $\DD$ contains exactly one segment decomposition respected by $f$.
\end{lemma}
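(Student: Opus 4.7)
The plan is to show that for any solution $f$, the function $g_f : A \to B$ defined by $g_f(2j) := c(f(2j))$ is the unique element of $C$ whose segment decomposition $S_{g_f}$ is respected by $f$. This will handle existence and uniqueness simultaneously.

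First, I would establish uniqueness. If $f$ respects $S_{g'}$ for some $g' \in C$, then the even-index segment constraint forces $f(2j) \in [g'(2j), \min\{n, g'(2j)+1\}]$, so $f(2j) \in \{g'(2j), g'(2j)+1\}$. Combined with $g'(2j)\in B$ being even, Fact~\ref{obs:boundsOnC} pins down $g'(2j) = c(f(2j))$, so $g' = g_f$.

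Next, I would check that $g_f$ actually lies in $C$. Its values are even by construction, at least $2$ because the strict increase of $f$ gives $f(2j) \geq 2j \geq 2$, and at most $n$ because $g_f(2j) \leq f(2j) \leq n$, so $g_f$ maps into $B$. To see that $g_f$ is strictly increasing, observe that the $X$-axis constraints yield $f(2j+2) \geq f(2j)+2$, since the intermediate value $f(2j+1)$ must fit strictly between them; a short parity case analysis on $f(2j)$ then gives $c(f(2j+2)) \geq c(f(2j))+2$.

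Finally, I would verify that $f$ respects $S_{g_f}$. The even indices are immediate from Fact~\ref{obs:boundsOnC}. For an interior odd index $2j+1$, the lower bound follows from $f(2j+1) \geq f(2j)+1 \geq g_f(2j)+1 \geq r_{2j}$, and the upper bound from $f(2j+1) \leq f(2j+2)-1 \leq g_f(2j+2) = \ell_{2j+2}$. The first segment $[1, g_f(2)]$ and, when $k$ is odd, the last segment $[r_{k-1}, n]$, are handled by specialising the same arithmetic at the endpoints.

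The main obstacle I anticipate is the strictly-increasing check for $g_f$: one must rule out $c(f(2j)) = c(f(2j+2))$, for otherwise $S_{g_f}$ would fail to be a valid segment decomposition (the inequality $r_{2j+1} \geq \ell_{2j+1}$ would collapse). This is precisely where the $X$-axis constraints earn their keep, because the forced intermediate value $f(2j+1)$ supplies the gap of two needed to push $c$ strictly upwards. The remaining work is then routine bookkeeping about the boundary segments and the clipping $r_{2j} = \min\{n, g_f(2j)+1\}$.
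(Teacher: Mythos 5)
Your proposal is correct and follows essentially the same route as the paper: define $g$ by $g(2j)=c(f(2j))$, verify that $f$ respects $S_g$ via Fact~\ref{obs:boundsOnC} and the $X$-axis constraints, and derive uniqueness from the evenness of the values of any $g'\in C$. Your only additions are cosmetic — you run uniqueness by directly pinning $g'(2j)=c(f(2j))$ rather than via $|g(2j)-g'(2j)|\geq 2$, and you explicitly check $g_f\in C$ (monotonicity and range), a detail the paper leaves implicit.
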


\begin{proof}
Let $f$ be a solution to $(\sigma, \pi)$. Let $g \in C$ be defined by $g(2i) = c(f(2i))$. Let $S_g$ consists of $[\ell_1, r_1], [\ell_2, r_2], \ldots, [\ell_k, r_k]$. We observe that:
\begin{description}
    \item[1)] $\ell_{2i} = c(f(2i)) \leq f(2i)$, where the inequality follows from Fact~\ref{obs:boundsOnC},
    \item[2)] $r_{2i} = \min\{n, c(f(2i)) + 1\} \geq f(2i)$, where the inequality follows from Fact~\ref{obs:boundsOnC},
    \item[3)] $\ell_1 = 1 \leq f(1)$, for $i > 0$ we know that $\ell_{2i + 1} = r_{2i} = \min\{n, c(f(2i)) + 1\} \leq f(2i) + 1 \leq f(2i + 1)$, where the first inequality follows from Fact~\ref{obs:boundsOnC} while the second holds because $f$ is a solution to $(\sigma, \pi)$ and fulfills $X$-axis constraints,
    \item[4)] for $2i + 1 = k$ we know that $r_{k} = n \geq f(k)$, for $2i + 1 < k$ we know that $r_{2i + 1} = c (f(2i + 2)) \geq f(2i + 2) - 1 \geq f(2i + 1)$, where the first inequality follows from Fact~\ref{obs:boundsOnC} while the second holds because $f$ is a solution to $(\sigma, \pi)$ and fulfills $X$-axis constraints.
\end{description}
Consequently, $f(i) \in [\ell_i, r_i]$ for every $i \in [k]$, thus $f$ respects $S_g$.
See Figure~\ref{fig:fastAlgorithmUniqueness} as an illustration.

It remains to establish that $f$ does not respect any other segment decomposition $S_{g'}$. By assumption,
$g'\in C$ is a function such that there exists $i$ such that $2i \in A$ and $g(2i) \neq g'(2i)$. Let $S_{g'}$ consists of $[\ell'_1, r'_1], [\ell'_2, r'_2], \ldots, [\ell'_k, r'_k]$. 
We know that both $g(2i)$ and $g'(2i)$ are even numbers, thus $|g(2i) - g'(2i)| \geq 2$. If $g'(2i) < g(2i)$, then $r'_{2i} \leq g'(2i)  + 1 < g(2i) \leq f(2i) $ and if $g(2i) < g'(2i)$, then $f(2i) \leq g(2i) + 1 < g'(2i) = \ell'_{2i}$. In both cases $f(2i) \notin [\ell^{'}_{2i}, r'_{2i}]$, thus $f$ does not respect $S_{g'}$. \end{proof}

\begin{figure}[htb]
    \centering
    \includegraphics{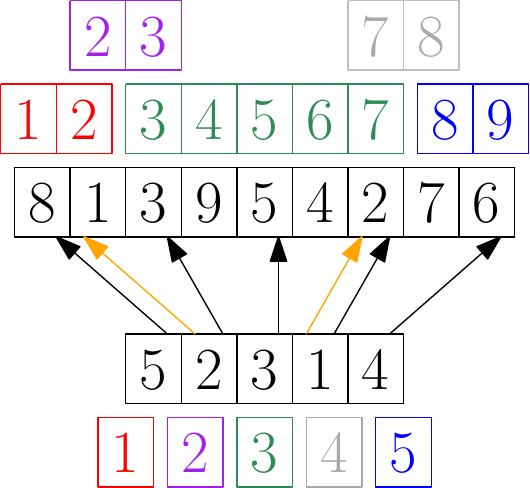}
    \caption{Situation from Lemma~\ref{lemma:uniqnessOfDecompositions} for $\sigma = (8, 1, 3, 9, 5, 4, 2, 7, 6)$, $\pi = (5, 2, 3, 1, 4)$ and a solution $f$ to $(\sigma,\pi)$, where $f(1) = 1$, $f(2) = 3$, $f(3) = 5$, $f(4) = 7$ and $f(5) = 9$. Black arrows represent $f$, orange arrows represent $g$, black blocks represent $\pi$ and $\sigma$, and colored blocks represent $S_g$.}
    \label{fig:fastAlgorithmUniqueness}
\end{figure}

\begin{theorem}
PPM can be solved in time $\OO(n \cdot 2^{\lfloor \frac{n}{2} \rfloor })=\OO(1.415^{n})$ using $\OO(n)$ space.
\end{theorem}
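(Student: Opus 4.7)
The plan is to iterate over all segment decompositions in $\DD$, apply Algorithm~\ref{alg:greedY} to each, and sum the results. By Lemma~\ref{lemma:uniqnessOfDecompositions}, each solution $f$ to $(\sigma,\pi)$ respects exactly one decomposition in $\DD$, so this sum is precisely the total number of solutions, which by Lemma~\ref{lemma:correctnessOfGreedy} equals what each invocation of Algorithm~\ref{alg:greedY} contributes locally. The enumeration is straightforward: iterate over all increasing functions $g : A \to B$, where $A$ consists of the even positions in $[k]$ and $B$ of the even positions in $[n]$, and for each such $g$ construct $S_g$ in $\OO(n)$ time from the definition $\ell_1 = 1$, $\ell_{2i} = g(2i)$, $r_{2i} = \min\{n, g(2i)+1\}$, $\ell_{2i+1} = r_{2i}$, $r_{2i+1} = \ell_{2i+2}$ (and $r_k = n$ if $k$ is odd).

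For the running time, observe that $|C| = \binom{|B|}{|A|} = \binom{\lfloor n/2 \rfloor}{\lfloor k/2 \rfloor} \leq 2^{\lfloor n/2 \rfloor}$, so the total work is $|C| \cdot \OO(n) = \OO(n \cdot 2^{\lfloor n/2 \rfloor})$ by Lemma~\ref{lemma:greedYtime}. Since $2^{1/2} < 1.415$, this is $\OO(1.415^{n})$. For the space bound, since the contributions from distinct $g$'s are accumulated into a single running sum, we can reuse the $\OO(n)$ working memory of Algorithm~\ref{alg:greedY} across iterations; enumerating $g$ itself can be done with an $\OO(n)$-sized array storing the current increasing sequence, advanced in the natural lexicographic order. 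Thus the total space usage remains $\OO(n)$.

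There is no serious obstacle here: correctness follows immediately from the uniqueness lemma, and the time and space bounds are direct consequences of $|C| \leq 2^{\lfloor n/2 \rfloor}$ together with the per-decomposition guarantees of Lemma~\ref{lemma:greedYtime}. The only point that deserves a brief comment is that one must enumerate $g$ with polynomial delay and reuse memory between iterations in order to retain linear space; this is routine.
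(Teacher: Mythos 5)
Your proposal is correct and follows essentially the same route as the paper: enumerate $\DD$, run Algorithm~\ref{alg:greedY} on each decomposition, and combine Lemmas~\ref{lemma:correctnessOfGreedy}, \ref{lemma:uniqnessOfDecompositions}, and~\ref{lemma:greedYtime} with the bound $|\DD| \leq \binom{\lfloor n/2\rfloor}{\lfloor k/2\rfloor} \leq 2^{\lfloor n/2\rfloor}$. The remarks on polynomial-delay enumeration and memory reuse match the paper's "simple recursive algorithm" observation.
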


\begin{proof}
We iterate over all segment decompositions $D\in \DD$. For each such $D$, we run Algorithm~\ref{alg:greedY} on $\sigma,\pi, D$
to count solutions to $(\sigma,\pi)$ that respect $D$, and add this number to the result. By
Lemma~\ref{lemma:correctnessOfGreedy} and Lemma~\ref{lemma:uniqnessOfDecompositions}, this gives us the total number of
solutions to $(\sigma,\pi)$. To analyse the complexity, we observe that
the size of $\DD$ is at most $\binom{\lfloor \frac{n}{2} \rfloor}{\lfloor \frac{k}{2} \rfloor}\leq 2^{n/2}$, and we can iterate over all its elements
in total time proportional to this size and $\OO(n)$ space with a simple recursive algorithm.
For each $D\in \DD$, Algorithm~\ref{alg:greedY} runs in time $\OO(n)$ and uses only $\OO(n)$ additional space.
\end{proof}

\section{Lowerbound}

For a fixed $n$ and $k$ let $\DD_{n, k}$ be a set such that for each length-$n$ permutation $\sigma$, length-$k$ permutation $\pi$ and a solution $f$ to $(\sigma, \pi)$ there exists exactly one $D \in \DD_{n, k}$, such that $D$ is a segment decomposition of $(\sigma, \pi)$ and $f$ respects $D$.

Let $A$ be a set of all increasing functions $[\lfloor \frac{k}{2} \rfloor] \rightarrow [\lfloor \frac{n - 1}{2} \rfloor]$. For each $f \in A$ we define a function $g_f : [k] \rightarrow [n]$ by $g_f(2i) = 2 f(i)$, $g_f(2i - 1) = 2 f(i) - 1$ and if $k$ is odd then $g_f(k) = n$. Let $B = \{g_f : f \in A\}$.
By definition, $|B| = \dbinom{\lfloor \frac{n - 1}{2} \rfloor }{ \lfloor \frac{k}{2} \rfloor}$.

\begin{figure}[htb]
    \centering
    \includegraphics{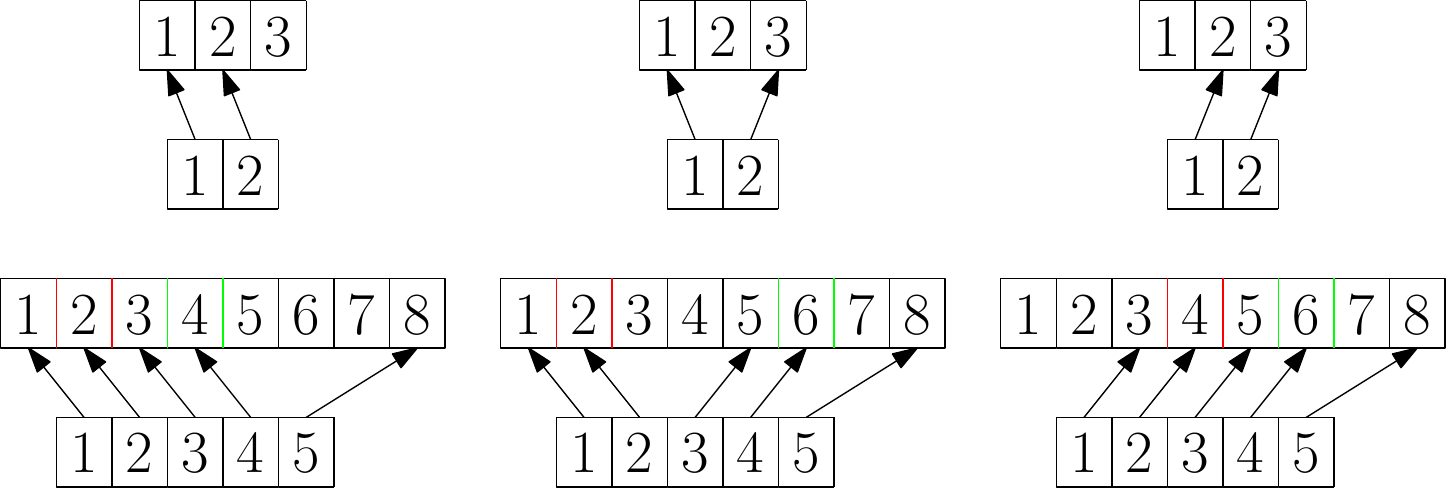}
    \caption{First row represents all functions $f \in A$ and second row represents the corresponding functions $g_f \in B$ for $n = 8$ and $k = 5$.
    Red lines show possible values of $r_1$ and green lines show possible values of $r_3$.}
    \label{fig:lowerboud}
\end{figure}

\begin{lemma}
$|\DD_{n, k}| \geq \dbinom{\lfloor \frac{n - 1}{2} \rfloor}{\lfloor \frac{k}{2} \rfloor}$
\end{lemma}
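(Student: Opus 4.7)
The plan is to construct an injection from $B$ into $\DD_{n,k}$, which will immediately give $|\DD_{n,k}|\geq |B|=\binom{\lfloor (n-1)/2\rfloor}{\lfloor k/2\rfloor}$. I would fix a single worst-case instance: let $\sigma$ be the identity on $[n]$ and $\pi$ the identity on $[k]$. With this choice, the $Y$-axis and $X$-axis constraints coincide and reduce to ``strictly increasing'', so every strictly increasing function $[k]\to[n]$ is a solution to $(\sigma,\pi)$. Each $g_f\in B$ is strictly increasing because $g_f(2i-1)=2f(i)-1<2f(i)=g_f(2i)$ and, since $f$ is strictly increasing, $g_f(2i)=2f(i)<2f(i+1)-1=g_f(2i+1)$, with the analogous easy check that $g_f(2i)<n=g_f(k)$ when $k$ is odd. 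Thus every $g_f$ is a solution to $(\sigma,\pi)$, and the defining property of $\DD_{n,k}$ furnishes a unique $D_{g_f}\in\DD_{n,k}$ that is respected by $g_f$.

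The crux is to show that $f\mapsto D_{g_f}$ is injective. I would argue by contradiction: take distinct $f,f'\in A$, let $i$ be the smallest index where $f(i)\neq f'(i)$, and assume without loss of generality that $f'(i)>f(i)$, so $f'(i)-f(i)\geq 1$. If both $g_f$ and $g_{f'}$ respect the same decomposition $D=[\ell_1,r_1],\ldots,[\ell_k,r_k]$, then containment of $g_{f'}(2i-1)=2f'(i)-1$ in $[\ell_{2i-1},r_{2i-1}]$ forces $r_{2i-1}\geq 2f'(i)-1$, while containment of $g_f(2i)=2f(i)$ in $[\ell_{2i},r_{2i}]$ forces $\ell_{2i}\leq 2f(i)$. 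Combining these with the ``overlap at most one'' property $r_{2i-1}\leq\ell_{2i}$ of a segment decomposition yields $2f'(i)-1\leq 2f(i)$, contradicting $f'(i)>f(i)$ over the integers.

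The structural point behind this short calculation is that $g_f$ packs two \emph{consecutive} values $2f(i)-1,2f(i)$ at the two \emph{consecutive} positions $2i-1,2i$, leaving essentially no slack in the corresponding segments: shifting either of these values by at least two (as any different $g_{f'}$ would require) is incompatible with the segment decomposition axioms. The main obstacle is choosing the correct family $B$, i.e.\ realising that pairing positions $2i-1,2i$ with consecutive values is exactly the configuration that saturates the overlap constraint; once the identity instance and this construction are fixed, the injectivity argument is immediate and the bound $|\DD_{n,k}|\geq\binom{\lfloor (n-1)/2\rfloor}{\lfloor k/2\rfloor}$ follows.
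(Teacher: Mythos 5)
Your proof is correct and follows essentially the same route as the paper: exhibit an instance (the paper leaves the choice implicit, you take identity permutations) for which every $g_f$ is a solution, invoke the defining property of $\DD_{n,k}$ to get a unique decomposition per $g_f$, and use the chain $g_f(2i-1)\leq r_{2i-1}\leq \ell_{2i}\leq g_f(2i)$ to conclude that $f\mapsto D_{g_f}$ is injective. The paper phrases injectivity as $r_{2i-1}\in\{2f(i)-1,2f(i)\}$ determining $f(i)$, while you argue by contradiction, but the underlying inequality is identical.
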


\begin{proof}
Let $f \in A$.
We can easily construct a length-$n$ permutation $\sigma$ and a length-$k$ permutation $\pi$, such that $f$ is a solution to $(\sigma, \pi)$.
Thus, by the assumed property of $\DD_{n, k}$ we know that there exists exactly one $D_f \in \DD_{n, k}$, such that $g_f$ respects $D_f$. Let $D_f$ be equal to $[\ell_1, r_1], \ldots, [\ell_k, r_k]$. We know that $2 f(i) - 1 = g_f(2i - 1) \leq r_{2i - 1} \leq l_{2i} \leq g_f(2i) = 2 f(i) $, thus $r_{2i - 1} \in \{2 f(i) - 1, 2 f(i)\}$. As a result, for each $f \in A$ we get different $D_f$, thus $|\DD_{n, k}|\geq |B| = \dbinom{\lfloor \frac{n - 1}{2} \rfloor } {\lfloor \frac{k}{2} \rfloor }$. See Figure~\ref{fig:lowerboud} as an illustration.
\end{proof}

\bibliographystyle{plainurl}
\bibliography{bibliography}

\end{document}